\numberwithin{equation}{section} 
\newcommand{\bea}{\begin{eqnarray}}
\newcommand{\eea}{\end{eqnarray}}
\newcommand{\ba}{\begin{array}}
\newcommand{\ea}{\end{array}}
\newcommand{\edc}{\end{document}}
\newcommand{\bc}{\begin{center}}
\newcommand{\ec}{\end{center}}
\newcommand{\be}{\begin{equation}}
\newcommand{\ee}{\end{equation}}
\newcommand{\dsf}{\displaystyle\frac}
\def\cb{{\mathcal B}}
\def\cf{{\mathcal F}}
\def\cg{{\mathcal G}}
\def\ch{{\mathcal H}}
\def\bc{{\mathbb C}}
\def\bn{{\mathbb N}}
\def\bq{{\mathbb Q}}
\def\br{{\mathbb R}}
\def\bz{{\mathbb Z}}
  \def\G{\Gamma}
\def\d{\delta}  
\def\e{\epsilon}
\def\l{\lambda} 
\def\m{\mu}
\def\n{\nu}
\def\s{\sigma} 
\def\t{\theta}
\def\w{\omega} \def\Om{\Omega}
\def\h{{\mathbf{h}}}
\def\xb{{\mathbf{x}}}
\def\sb{{\mathbf{s}}}
\def\yb{{\mathbf{y}}}
\def\Bb{{\mathbf{B}}}
\newtheorem{thm}{Theorem}[section]
\newtheorem{lem}[thm]{Lemma}
\theoremstyle{remark}
\newtheorem{rem}{Remark}[section]
\newtheorem{obs}{Observation}[section]
\begin{document}

\title[$p$-adic quasi Gibbs measure]
{Existence of $P$-adic quasi Gibbs measure for countable state Potts
model on the Cayley tree}


\author{Farrukh Mukhamedov}
\address{Farrukh Mukhamedov\\
 Department of Computational \& Theoretical Sciences\\
Faculty of Science, International Islamic University Malaysia\\
P.O. Box, 141, 25710, Kuantan\\
Pahang, Malaysia} \email{{\tt far75m@yandex.ru}, {\tt
farrukh\_m@iium.edu.my}}

\begin{abstract}
In the present paper we provide a new construction of measure,
called $p$-adic quasi Gibbs measure, for countable state of $p$-adic
Potts model on the Cayley tree. Such a construction depends on a
parameter $\frak{p}$ and wights. In particular case, i.e. if
$\frak{p}=\exp_p$, the defined measure coincides with $p$-adic Gibbs
measure. In this paper, under some condition on weights we establish
the existence of $p$-adic quasi Gibbs measures associated with the
model. Note that this condition does not depend on values of the
prime $p$. An analogues fact is not valid when the number of spins
is finite. \vskip 0.3cm \noindent {\it
Mathematics Subject Classification}: 46S10, 82B26, 12J12.\\
{\it Key words}: countable; $p$-adic numbers; Potts model; Gibbs
measure; uniqueness.
\end{abstract}

\maketitle

\section{introduction}

Interest in the physics of non-Archimedean quantum models
\cite{ADFV,BC,FW,MP,VVZ} is based on the idea that the structure of
space-time for very short distances might conveniently be described
in terms of non-Archimedean numbers. One of the ways to describe
this violation of the Archimedean axiom, is the using $p$-adic
analysis. Numerous applications of this analysis to mathematical
physics have been proposed in
\cite{ABK},\cite{Kh1},\cite{Kh2},\cite{MP}. It is known \cite{Kh2}
that a number of $p$-adic models in physics cannot be described
using ordinary Kolmogorov's probability theory. New probability
models - $p$-adic probability models were investigated in
\cite{K3},\cite{KhN},\cite{KYR}. In \cite{KL,Lu,K4} the theory of
stochastic processes with values in $p$-adic and more general
non-Archimedean fields having probability distributions with
non-Archimedean values has been developed. This gives a possibility
to develop the theory of statistical mechanics in the context of the
$p$-adic theory, since it lies on the base of the theory of
probability and stochastic processes. The central problems of that
theory \cite{G} is the study of infinite-volume Gibbs measures
corresponding to a given Hamiltonian, which also includes a
description of the set of Gibbs measures. In most cases such
analysis depend on a specific properties of Hamiltonian, and
complete description is often a difficult problem. This problem, in
particular, relates to a phase transitions problem.

In \cite{KK1,KK2} a notion of ultrametric Markovianity, which
describes independence of contributions to random field from
different ultrametric balls has been introduced, and showed that
Gaussian random fields on general ultrametric spaces (which were
related with hierarchical trees), which were defined as a solution
of pseudodifferential stochastic equation, satisfy the Markovianity.
Some applications of the results to replica matrices, related to
general ultrametric spaces have been investigated in \cite{KK3}( see
also \cite{DF}).

In \cite{MR1,MR2} we have developed of $p$-adic probability theory
approaches to study $q+1$-state nearest-neighbor $p$-adic Potts
model on Cayley tree (see \cite{W}). We constructed infinite volume
$p$-adic Gibbs measures for the mentioned model, and moreover,
established the existence of a phase transition (here the phase
transition means the existence of two distinct $p$-adic Gibbs
measures for the given model). Further, in \cite{Mq} we have
introduced a new kind of $p$-adic measures for the mentioned model,
called {\it $p$-adic quasi Gibbs measure}. For such a model, we
investigated a phase transition phenomena from the associated
dynamical system point of view. Namely, we established that if $q$
is not divisible by $p$, then there occurs the quasi phase
transition. Note that such kind of measures present more natural
concrete examples of $p$-adic Markov processes (see \cite{KL}, for
definitions). In \cite{KM,KM1,M} we investigated a countable state
$p$-adic Potts model on the Cayley tree\footnote{The classical (real
value) counterparts of such models were considered in \cite{G,GR}.},
and provided a construction of $p$-adic Gibbs measures which depends
on weights $\l$. Moreover, the uniqueness of such measures under
certain conditions to the weight was proved.

In this paper we continue our investigations on countable state
$p$-adic Potts model. Namely, we are going to provide a new
construction of measure, called $p$-adic quasi Gibbs measure, for
the mentioned model on the Cayley tree. Such a construction depends
on a parameter $\frak{p}$ and wights $\{\l(i)\}_{i\in\bn}$. In
particular case, i.e. if $\frak{p}=\exp_p$, the defined measure
coincides with $p$-adic Gibbs measure (see \cite{KM,KM1}). In this
paper, under some condition on weights we establish the existence of
$p$-adic quasi Gibbs measures associated with the model. Note that
this condition does not depend on values of the prime $p$. An
analogues fact is not valid when the number of spins is finite.
Moreover, our result extends the previous proved ones in
\cite{KM1,M}. Note that in comparison to a real case, in a $p$-adic
setting, \`{a} priori the existence of such kind of measures for the
model is not known, since there is not much information on
topological properties of the set of all $p$-adic measures defined
even on compact spaces. However, in the real case, there is the so
called the Dobrushin's Theorem \cite{Dob1,Dob2} which gives a
sufficient condition for the existence of the Gibbs measure for a
large class of Hamiltonians. Note that when states are finite, then
the corresponding $p$-adic Potts models on the same trees have been
studied in \cite{Mq}.

\section{Preliminaries}

Fix a prime number $p$, which throughout the paper will be a fixed
greater than 3, and let $\bq_p$ denote the field of $p$-adic
filed, formed by completing $\bq$  with respect to the unique
absolute value satisfying $|p| = 1/p$. The absolute value
$|\cdot|$, is non- Archimedean, meaning that it satisfies the
ultrametric triangle inequality $|x + y|_p \leq \max\{|x|_p,
|y|_p\}$.

Given $a\in \bq_p$ and  $r>0$ put
$$
B(a,r)=\{x\in \bq_p : |x-a|_p< r\}, \ \ S(a,r)=\{x\in \bq_p :
|x-a|_p=r\}.
$$
The {\it $p$-adic exponential} is defined by
$$
\exp_p(x)=\sum_{n=0}^{\infty}\dsf{x^n}{n!},
$$
which converges for $x\in B(0,p^{-1/(p-1)})$.

\begin{lem}\label{pr}\cite{KM} If $|a_i|_p\leq 1$, $|b_i|_p\leq 1$, $i=1,\dots,n$, then
\begin{equation*}
\bigg|\prod_{i=1}^{n}a_i-\prod_{i=1}^n b_i\bigg|_p\leq \max_{i\leq
i\leq n}\{|a_i-b_i|_p\}
\end{equation*}
\end{lem}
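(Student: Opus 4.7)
The plan is to prove the inequality by induction on $n$, exploiting the familiar telescoping trick for differences of products together with the strong (ultrametric) triangle inequality available in $\bq_p$. The base case $n=1$ is immediate, and the inductive step is where the non-Archimedean norm does essentially all of the work.

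For the inductive step, assuming the inequality holds for $n$ factors, I would rewrite the difference of the $(n+1)$-fold products by inserting and subtracting a mixed term:
\begin{equation*}
\prod_{i=1}^{n+1}a_i-\prod_{i=1}^{n+1}b_i
= a_{n+1}\left(\prod_{i=1}^{n}a_i-\prod_{i=1}^{n}b_i\right)
+ (a_{n+1}-b_{n+1})\prod_{i=1}^{n}b_i.
\end{equation*}
Applying the ultrametric triangle inequality $|x+y|_p\le\max\{|x|_p,|y|_p\}$ then gives
\begin{equation*}
\bigg|\prod_{i=1}^{n+1}a_i-\prod_{i=1}^{n+1}b_i\bigg|_p
\le \max\bigg\{|a_{n+1}|_p\bigg|\prod_{i=1}^{n}a_i-\prod_{i=1}^{n}b_i\bigg|_p,\;
|a_{n+1}-b_{n+1}|_p\bigg|\prod_{i=1}^{n}b_i\bigg|_p\bigg\}.
\end{equation*}

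The hypothesis $|a_i|_p\le 1$ and $|b_i|_p\le 1$ ensures that both $|a_{n+1}|_p\le 1$ and $\big|\prod_{i\le n}b_i\big|_p\le 1$, so the two factors on the right drop out. The inductive hypothesis bounds the first inner difference by $\max_{1\le i\le n}|a_i-b_i|_p$, and the second term is itself $|a_{n+1}-b_{n+1}|_p$; taking the maximum yields exactly $\max_{1\le i\le n+1}|a_i-b_i|_p$, completing the induction.

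There is no real obstacle here: the argument is essentially a one-line telescoping combined with the ultrametric inequality, and the boundedness hypotheses are used precisely to prevent the prefactors $|a_{n+1}|_p$ and $\big|\prod b_i\big|_p$ from inflating the estimate. The only thing to watch is keeping the bookkeeping clean so that the maximum on the right-hand side is over the correct index set after each inductive step.
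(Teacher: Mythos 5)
Your proof is correct. Note that the paper itself gives no proof of this lemma --- it is simply quoted from the reference [KM] --- so there is nothing to compare against line by line; your telescoping decomposition
\begin{equation*}
\prod_{i=1}^{n+1}a_i-\prod_{i=1}^{n+1}b_i
= a_{n+1}\Big(\prod_{i=1}^{n}a_i-\prod_{i=1}^{n}b_i\Big)
+ (a_{n+1}-b_{n+1})\prod_{i=1}^{n}b_i
\end{equation*}
combined with the strong triangle inequality and the multiplicativity of $|\cdot|_p$ (which gives $|a_{n+1}|_p\le 1$ and $|\prod_{i\le n}b_i|_p\le 1$) is exactly the standard argument one would expect, and every step checks out, including the bookkeeping of the index set in the final maximum.
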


Note the basics of $p$-adic analysis, $p$-adic mathematical physics
are explained in \cite{Ko,VVZ}.

Let $(X,\cb)$ be a measurable space, where $\cb$ is an algebra of
subsets $X$. A function $\m:\cb\to \bq_p$ is said to be a {\it
$p$-adic measure} if for any $A_1,\dots,A_n\subset\cb$ such that
$A_i\cap A_j=\emptyset$ ($i\neq j$) the equality holds
$$
\mu\bigg(\bigcup_{j=1}^{n} A_j\bigg)=\sum_{j=1}^{n}\mu(A_j).
$$
A $p$-adic measure is called a {\it probability measure} if
$\mu(X)=1$. For more detail information about $p$-adic measures we
refer to \cite{K3},\cite{KhN},\cite{Ro}.

Let $\Gamma^k = (V,L)$ be a semi-infinite Cayley tree of order
$k\geq 1$ with the root $x^0$ (whose each vertex has exactly $k+1$
edges, except for the root $x^0$, which has $k$ edges). Here $V$ is
the set of vertices and $L$ is the set of edges. The vertices $x$
and $y$ are called {\it nearest neighbors} and they are denoted by
$l=<x,y>$ if there exists an edge connecting them. A collection of
the pairs $<x,x_1>,\dots,<x_{d-1},y>$ is called a {\it path} from
the point $x$ to the point $y$. The distance $d(x,y), x,y\in V$, on
the Cayley tree, is the length of the shortest path from $x$ to $y$.
Let us set
$$ W_n=\{x\in V| d(x,x^0)=n\}, \ \ \
V_n=\bigcup_{m=1}^n W_m, \ \ L_n=\{l=<x,y>\in L | x,y\in V_n\}.
$$ The set of {\it direct successors} of $x$ is defined by
\begin{equation}\label{S(x)}
S(x)=\{y\in W_{n+1} :  d(x,y)=1 \}, \ \ x\in W_n.
\end{equation}
Observe that any vertex $x\neq x^0$ has $k$ direct successors and
$x^0$ has $k+1$.

 Now we are going to introduce a coordinate
structure in $\G^k$. Every vertex $x$ (except for $x_0$) of $\G^k$
has coordinates $(i_1,\dots,i_n)$, here $i_m\in\{1,\dots,k\}$,
$1\leq m\leq n$ and for the vertex $x_0$ we put $\emptyset$.
  Namely, the symbol $\emptyset$
constitutes level 0 and the sites $(i_1,\dots,i_n)$ form level $n$
of the lattice. In this notation for $x\in \G^k$,
$x=(i_1,\dots,i_n)$ we have
$$
S(x)=\{(x,i):\ 1\leq i\leq k\},
$$
here $(x,i)$ means that $(i_1,\dots,i_n,i)$.


Let us define on $\G^k$ a binary operation
$\circ:\G^k\times\G^k\to\G^k$ as follows: for any two elements
$x=(i_1,\dots,i_n)$ and $y=(j_1,\dots,j_m)$ put
\begin{equation}\label{binar1}
x\circ
y=(i_1,\dots,i_n)\circ(j_1,\dots,j_m)=(i_1,\dots,i_n,j_1,\dots,j_m)
\end{equation}
and
\begin{equation}\label{binar2}
x\circ x^0=x^0\circ x= (i_1,\dots,i_n)\circ(0)=(i_1,\dots,i_n).
\end{equation}

By means of the defined operation $\G^k$ becomes a noncommutative
semigroup with a unit. Using this semigroup structure one defines
translations $\tau_g:\G^k\to \G^k$, $g\in \G^k$ by
\begin{equation}\label{trans1}
\tau_g(x)=g\circ x.
\end{equation}
It is clear that $\tau_{(0)}=id$.

Similarly, by means of $\tau_g$ one can define translations
$\tilde\tau_g: L\to L$ of $L$. Namely,
$$
\tilde\tau_g(<x,y>)=<\tau_g(x),\tau_g(y)>.
$$

Let $G\subset \G^k$ be a sub-semigroup of $\G^k$ and $h:\G^k\to \br$
be a function defined on $\G^k$. We say that $h$ is {\it
$G$-periodic} if $h(\tau_g(x))=h(x)$ for all $g\in G$ and $x\in
\G^2$. Any $\G^k$-periodic function is called {\it translation
invariant}. Put
\begin{equation}\label{sub}
G_k=\{x\in \G^k: \ d(x,x^0)\equiv 0 (\textrm{mod}\ k) \}, \ \ k\geq
2
\end{equation}
One can check that $G_k$ is a sub-semigroup with a unit.

\section{The $p$-adic Potts model}

We consider the $p$-adic Potts model where spin takes values in the
set $\Phi=\{0,1,2,\cdots,\}$ and is assigned to the vertices of the
tree $\G^k=(V,L)$. A configuration $\s$ on $V$ is then defined as a
function $x\in V\to\s(x)\in\Phi$; in a similar manner one defines
configurations $\s_n$ and $\s^{(n)}$ on $V_n$ and $W_n$,
respectively. The set of all configurations on $V$ (resp. $V_n$,
$W_n$) coincides with $\Omega=\Phi^{V}$ (resp.
$\Omega_{V_n}=\Phi^{V_n},\ \ \Omega_{W_n}=\Phi^{W_n}$). One can see
that $\Om_{V_n}=\Om_{V_{n-1}}\times\Om_{W_n}$. Using this, for given
configurations $\s_{n-1}\in\Om_{V_{n-1}}$ and
$\s^{(n)}\in\Om_{W_{n}}$ we define their concatenations  by
$$\s_{n-1}\vee\s^{(n)}=\{\{\s_n(x),x\in
V_{n-1}\},\{\s^{(n)}(y),y\in W_n\}\}.$$ It is clear that
$\s_{n-1}\vee\s^{(n)}\in \Om_{V_n}$.

The Hamiltonian $H_n:\Om_{V_n}\to\bq_p$ of the $p$-adic countable
state Potts model has the form
\begin{equation}\label{Potts}
H_n(\s)=\sum_{<x,y>\in L_n}N_{xy}\delta_{\s(x),\s(y)},  \ \
n\in\mathbb{N},
\end{equation}
here $\s\in\Om_n$, $N_{xy}\in\bn$ ($x,y\in V$) and $\delta$ is the
Kronecker symbol.

We are going to construct $p$-adic quasi Gibbs measures for the
model.

Let us consider a function $\h: x\in V\to
\h_x=\{h_{i,x}\}_{i\in\Phi}\in\bq_p^{\Phi}$ of $x\in
V\setminus\{x^{(0)}\}$. Here $\bq_p^{\Phi}=\{\{x_i\}_{i\in \Phi}:
x_i\in\bq_p\}.$

 Fix a sequence $\l\in\bq_p^{\Phi}$ such that
\begin{equation}\label{L}
|\l(n)|_p\to 0 \ \ \textrm{as} \ \ n\to\infty.
\end{equation}
Such kind of sequences are called {\it weights}.

Given $n=1,2,\dots$ define a $p$-adic probability measure
$\m^{(n)}_\h$ on $\Om_{V_n}$ by
\begin{equation}\label{mu}
\mu^{(n)}_{\h}(\s_n)=\frac{1}{Z_n^{(\h)}}\frak{p}^{H_n(\s_n)}\prod_{x\in
W_n}h_{\s_n(x),x}\prod_{x\in V_n}\l(\s_n(x))
\end{equation}
where $\frak{p}$ is any $p$-adic number such that $|\frak{p}|_p\leq
1$. In particular, it could be $\frak{p}=p$. Here, as before,
$\s_n:x\in V_n\to\s_n(x)\in\Phi$ and $Z_n^{(\h)}$ is the
corresponding partition function:
\begin{equation}\label{Zn}
Z_n^{(\h)}=\sum_{\s\in\Omega_{V_n}}\frak{p}^{H_n(\s)}\prod_{x\in
W_n}h_{\s(x),x}\prod_{x\in V_n}\l(\s(x)).
\end{equation}

One of the central results of the theory of probability concerns a
construction of an infinite volume distribution with given
finite-dimensional distributions, which is called {\it Kolmogorov's
Theorem} \cite{Sh}. Therefore, in this paper we are interested in
the same question but in a $p$-adic context. More exactly, we want
to define a $p$-adic probability measure $\m$ on $\Om$ which is
compatible with defined ones $\m_\h^{(n)}$, i.e.
\begin{equation}\label{CM}
\m(\s\in\Om: \s|_{V_n}=\s_n)=\m^{(n)}_\h(\s_n), \ \ \ \textrm{for
all} \ \ \s_n\in\Om_{V_n}, \ n\in\bn.
\end{equation}

In general, \`{a} priori the existence such a kind of measure $\m$
is not known, since there is not much information on topological
properties, such as compactness, of the set of all $p$-adic measures
defined even on compact spaces \footnote{In the real case, when the
state space is compact, then the existence follows from the
compactness of the set of all probability measures (i.e. Prohorov's
Theorem). When the state space is non-compact, then there is a
Dobrushin's Theorem \cite{Dob1,Dob2} which gives a sufficient
condition for the existence of the Gibbs measure for a large class
of Hamiltonians.}. Note that certain properties of the set of
$p$-adic measures has been studied in \cite{kas2,kas3}, but those
properties are not enough to prove the existence of the limiting
measure. Therefore, at a moment, we can only use the $p$-adic
Kolmogorov extension Theorem (see \cite{GMR},\cite{KL}) which based
on so called {\it compatibility condition} for the measures
$\m_\h^{(n)}$, $n\geq 1$, i.e.
\begin{equation}\label{comp}
\sum_{\w\in\Om_{W_n}}\m^{(n)}_\h(\s_{n-1}\vee\w)=\m^{(n-1)}_\h(\s_{n-1}),
\end{equation}
for any $\s_{n-1}\in\Om_{V_{n-1}}$. This condition according to the
theorem implies the existence of a unique $p$-adic measure $\m$
defined on $\Om$ with a required condition \eqref{CM}. Note that
more general theory of $p$-adic measures has been developed in
\cite{kas1}.

So, if for some function $\h$ the measures  $\m_\h^{(n)}$ satisfy
the compatibility condition, then there is a unique $p$-adic
probability measure, which we denote by $\m_\h$, since it depends on
$\h$. Such a measure $\m_\h$ is said to be {\it a $p$-adic quasi
Gibbs measure} corresponding to the $p$-adic Potts model. By
$Q\cg(H)$ we denote the set of all $p$-adic quasi Gibbs measures
associated with functions $\h=\{\h_x,\ x\in V\}$. If there are at
least two distinct $p$-adic quasi Gibbs measures $\m,\n\in Q\cg(H)$
such that $\m$ is bounded and $\n$ is unbounded, then we say that
{\it a phase transition} occurs. By another words, one can find two
different functions $\sb$ and $\h$ defined on $\bn$ such that there
exist the corresponding measures $\m_\sb$ and $\m_\h$, for which one
is bounded, another one is unbounded. If there are two different
functions $\sb$ and $\h$ defined on $\bn$ such that there exist the
corresponding measures $\m_\sb$, $\m_\h$, and they are bounded, then
we say there is a {\it quasi phase transition}.

\begin{rem} Note that in \cite{KM} we considered the
following sequence of $p$-adic measures defined by
\begin{equation}\label{mr-mu}
\mu^{(n)}_{\h}(\s_n)=\frac{1}{\tilde
Z_n^{(\h)}}\exp_p\bigg\{H_n(\s_n)+\sum_{x\in
W_n}h_{\s_n(x),x}\bigg\}\prod_{x\in V_n}\l(\s_n(x))
\end{equation}
here as usual $\tilde Z_n^{(\h)}$ is the corresponding normalizing
factor. A limiting $p$-adic measures generated by \eqref{mr-mu} was
called {\it $p$-adic Gibbs measure}. Such kind of measures have been
studied in \cite{KM}.
\end{rem}

Now let us find for what kind of functions $\h=\{\h_x : x\in V\}$
the measures defined by \eqref{mu} would satisfy the compatibility
condition \eqref{comp}. The following statement describes conditions
on $\h$ guaranteeing the compatibility condition for the measures
$\m^{(n)}_\h$.

\begin{thm}\label{comp1} The measures $\m^{(n)}_\h$, $
n=1,2,...$ satisfy the compatibility condition \eqref{comp} if and
only if for any $x\in V\setminus\{x^{(0)}\}$ the following
equation holds:
\begin{equation}\label{eq1}
\hat h_{i,x}=\frac{\l(i)}{\l(0)}\prod_{y\in S(x)}F_i(\hat
\h_y;\theta_{xy}), \ \ i\in\bn
\end{equation}
here and below $\theta_{xy}=\frak{p}^{N_{xy}}$, a vector $\hat
\h=\{\hat h_i\}_{i\in\bn}\in\bq_p^\bn$ is defined by a vector
$\h=\{h_i\}_{i\in\Phi}$ as follows
\begin{equation}\label{H}
\hat h_i=\frac{h_i\l(i)}{h_0\l(0)}, \ \ \ i\in\bn
\end{equation}
and a mapping $F:\bq_p^{\bn}\to\bq_p^{\bn}$ is
$F(\xb;\theta)=\{F_i(\xb;\theta)\}_{i\in\bn}$ with
\begin{equation}\label{eq2}
F_i(\xb;\theta)=\frac{(\theta-1)x_i+\sum\limits_{j=1}^{\infty}x_j+1}
{\sum\limits_{j=1}^{\infty}x_j+\theta}, \ \ \xb=\{x_i\}_{i\in\bn}, \
\ i\in\bn.
\end{equation}
\end{thm}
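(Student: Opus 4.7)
The plan is to unfold both sides of the compatibility condition \eqref{comp} explicitly, exploit the factorisation of the partition sum across level $n$, and then take a ratio of the resulting identity at spin values $i$ and $0$ to eliminate the unknown ratio of normalising constants.

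First I would split the Hamiltonian at level $n$ as
\[
H_n(\s_{n-1}\vee\w)=H_{n-1}(\s_{n-1})+\sum_{x\in W_{n-1}}\sum_{y\in S(x)}N_{xy}\delta_{\s_{n-1}(x),\w(y)},
\]
since every edge in $L_n\setminus L_{n-1}$ joins a vertex of $W_{n-1}$ to one of its successors. Substituting into \eqref{mu} and using $\Om_{W_n}=\prod_{y\in W_n}\Phi$, the sum over $\w$ factorises site-by-site: each $y\in W_n$ has a unique parent $x\in W_{n-1}$, and the factor attached to the pair $(x,y)$ with $i:=\s_{n-1}(x)$ is
\[
T_{x,y}(i)\;:=\;\sum_{j\in\Phi}\frak{p}^{N_{xy}\delta_{i,j}}h_{j,y}\l(j)\;=\;(\theta_{xy}-1)\,h_{i,y}\l(i)+\sum_{j=0}^{\infty}h_{j,y}\l(j).
\]

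Dividing the resulting identity \eqref{comp} by the formula for $\m^{(n-1)}_\h(\s_{n-1})$ and cancelling the common factor $\frak{p}^{H_{n-1}(\s_{n-1})}\prod_{x\in V_{n-1}}\l(\s_{n-1}(x))$, I would obtain
\[
\frac{Z_{n-1}^{(\h)}}{Z_n^{(\h)}}\prod_{x\in W_{n-1}}\prod_{y\in S(x)}T_{x,y}(\s_{n-1}(x))=\prod_{x\in W_{n-1}}h_{\s_{n-1}(x),x},
\]
valid for every $\s_{n-1}\in\Om_{V_{n-1}}$. Next I would fix a single $x\in W_{n-1}$, let $\s_{n-1}(x)$ vary while freezing the other coordinates, and divide the equation at $\s_{n-1}(x)=i$ by the one at $\s_{n-1}(x)=0$. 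Both $Z_{n-1}^{(\h)}/Z_n^{(\h)}$ and the factors attached to vertices $x'\neq x$ drop out, leaving $\prod_{y\in S(x)}T_{x,y}(i)/T_{x,y}(0)=h_{i,x}/h_{0,x}$ for every $i\in\bn$. Multiplying by $\l(i)/\l(0)$, invoking the definition \eqref{H} of $\hat h$, and dividing the explicit expression for $T_{x,y}(i)/T_{x,y}(0)$ above and below by $h_{0,y}\l(0)$ (so that $\sum_{j=0}^\infty h_{j,y}\l(j)/(h_{0,y}\l(0))=1+\sum_{j=1}^\infty\hat h_{j,y}$) converts the right-hand side into $F_i(\hat\h_y;\theta_{xy})$ from \eqref{eq2}, giving \eqref{eq1}. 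For the converse I would simply reverse the chain: once \eqref{eq1} holds, the identity $h_{i,x}/h_{0,x}=\prod_{y}T_{x,y}(i)/T_{x,y}(0)$ forces the $\s_{n-1}$-dependent ratio above to be constant in $\s_{n-1}$, pinning down $Z_{n-1}^{(\h)}/Z_n^{(\h)}$ consistently and reproducing \eqref{comp}.

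The main obstacle is analytic rather than algebraic: in the countable-state setting one must make sense of the infinite series $\sum_{j=0}^{\infty}h_{j,y}\l(j)$, of the partition functions $Z_n^{(\h)}$, and of the denominators $\theta_{xy}+\sum_{j=1}^{\infty}\hat h_{j,y}$ of $F_i$ as honest elements of $\bq_p$, and ensure that nothing appearing in a denominator vanishes. This is precisely where the weight condition \eqref{L} enters: given $|\l(n)|_p\to 0$, the ultrametric property (together with Lemma \ref{pr} where products need comparing) forces every series in sight to converge as soon as the relevant $h$'s are bounded. Modulo this convergence bookkeeping the theorem is a direct computation.
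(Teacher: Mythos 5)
Your proposal is correct and is essentially the computation the paper has in mind: the paper omits the details, saying only that "the proof consists of checking the condition \eqref{comp} for the measures \eqref{mu}" with reference to \cite{GR},\cite{KM}, and the standard argument there is exactly your factorisation of the sum over $\Om_{W_n}$, the cancellation of $Z_{n-1}^{(\h)}/Z_n^{(\h)}$ by taking the ratio of the identity at $\s_{n-1}(x)=i$ and $\s_{n-1}(x)=0$, and the normalisation $\sum\mu=1$ to pin the constant down to $1$ in the converse direction. Your closing remark about convergence of $\sum_j h_{j,y}\l(j)$ is precisely what the paper handles via condition \eqref{L} and Remark \ref{r2}.
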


The proof consists of checking the condition \eqref{comp} for the
measures \eqref{mu} (cp. \cite{GR},\cite{KM}).

\begin{rem}\label{r2} Note that thanks to non-Archimedeanity of the
norm $|\cdot|_p$ the series $\sum\limits_{k=1}^\infty x_k$ converges
iff the sequence $\{x_n\}$ converges to 0 (see \cite{Ko}).
Therefore, in what follows we should assume that
$\sum\limits_{k=1}^\infty \hat h_{k,x}$ converges, otherwise the
equation \eqref{eq1} has no sense.
\end{rem}

\begin{rem}\label{r3} In what follows, without loss of generality, we may
assume that $h_0=1$ and $\l(0)=1$. Otherwise, in \eqref{mu} we
multiply and divide the expression on the right hand side by
$\prod_{x\in W_n}h_{0,x}\prod_{x\in V_n}\l(0)$, and after replacing
$h_i$ by $h_i/h_0$ and $\l(k)$ by $\l(k)/\l(0)$, respectively, we get the desired equality.\\
\end{rem}

\begin{obs}\label{o1} Here we are going to underline a connection
between $q$-state Potts model with the defined one. First recall
that $q$-state Potts model is defined by the same Hamiltonian
\eqref{Potts}, but with the state space $\Phi_q=\{0,1,\dots,q-1\}$.
Similarly, one can define $p$-adic quasi Gibbs measures for the
$q$-state Potts model, here instead of the weight $\{\l(i)\}$ we
will take a collection $\{\l(0),\l(1),\dots,\l(q-1)\}\subset\bq_p$.

Now consider countable Potts model with a weight $\{\l(i)\}$ such
that
\begin{equation}\label{qq}
\l(k)=0 \ \ \textrm{for all} \ k\geq q, \ q>1.
\end{equation}
In this case the corresponding $p$-adic quasi Gibbs measures will
coincide with those of $q$-state Potts model. Indeed, let
\begin{eqnarray*}
&& \Om^c=\{\s\in\Om:\ \exists j\in\bz_+:\ \s(j)\geq q\}\\
&& \Om^{(q)}=\{\s\in\Om:\  \s(j)\leq q-1 \ \forall j\in\bz_+\}
\end{eqnarray*}
It is clear that $\Om^{(q)}=\Phi_q^{\bz_+}$. Let $\m$ be a $p$-adic
quasi Gibbs measure of the countable Potts model with the given
weight corresponding to a solution $\h_n=\{h_{i,n}\}_{i\in\Phi}$ of
\eqref{eq1}. From the definition \eqref{mu} we see that the
restriction of $\m$ to $\Om^c$ is zero, i.e. $\m\lceil_{\Om^c}=0$.
Moreover, from \eqref{eq1} and \eqref{qq} we conclude that
$h_{i,n}=0$ for all $i\geq q$. This means that vectors
$\h_n^{(q)}=\{h_{i,n}\}_{i\in\Phi_q}$ will be a solution of
\eqref{eq1} corresponding to the $q$-state Potts model. Therefore,
the restriction of $\m$ to $\Om^{(q)}$ coincides with $p$-adic quasi
Gibbs measure of $q$-state Potts model with a weight
$\{\l(0),\l(1),\dots,\l(q-1)\}$ corresponding to a solution of
$\h_n^{(q)}$.

Hence, we conclude that under the condition \eqref{qq} all $p$-adic
quasi Gibbs measures corresponding to countable Potts model are
described by those measures of $q$-state Potts model.
\end{obs}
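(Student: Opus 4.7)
The plan is to establish three statements that together yield the observation: (a) every $p$-adic quasi Gibbs measure $\m_\h$ for the countable model is concentrated on $\Om^{(q)}$; (b) on $\Om^{(q)}$ the defining formula \eqref{mu} reduces verbatim to the quasi Gibbs formula for the $q$-state model; and (c) the compatibility equation \eqref{eq1} collapses to the corresponding equation of the $q$-state model, so solutions of the two systems are in bijection.

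For (a), I would inspect \eqref{mu}: the factor $\prod_{x\in V_n}\l(\s_n(x))$ contains $\l(\s_n(x))=0$ as soon as some $\s_n(x)\geq q$, by \eqref{qq}. Hence every cylinder $\{\s\in\Om:\ \s|_{V_n}=\s_n\}$ with $\s_n\notin\Phi_q^{V_n}$ has $\m_\h$-measure zero. Countable additivity, together with the compatibility property \eqref{CM}, then yields $\m_\h(\Om^c)=0$, so $\m_\h$ is concentrated on $\Om^{(q)}=\Phi_q^{\bz_+}$.

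For (b), on $\Om^{(q)}$ the partition function \eqref{Zn} collapses to a sum over $\Phi_q^{V_n}$, because every other summand carries a zero weight. Therefore \eqref{mu} restricted to $\Om^{(q)}$ coincides with the quasi Gibbs formula for the $q$-state Potts model with weights $\{\l(0),\dots,\l(q-1)\}$ and boundary data $\{h_{i,x}\}_{i\in\Phi_q}$. The values $h_{i,x}$ for $i\geq q$ are irrelevant, since in every surviving term they would multiply a vanishing weight. For (c), using Remark \ref{r3} I normalize $\l(0)=h_{0,x}=1$, so \eqref{H} reads $\hat h_{i,x}=h_{i,x}\l(i)$, giving $\hat h_{i,x}=0$ whenever $i\geq q$. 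In \eqref{eq2} the sums $\sum_{j=1}^\infty x_j$ reduce to finite sums $\sum_{j=1}^{q-1}x_j$, so \eqref{eq1} for $i\in\{1,\dots,q-1\}$ is precisely the $q$-state recursion, while for $i\geq q$ both sides vanish and the equation is trivially satisfied. Thus every solution $\{\hat h_{i,x}\}_{i\in\bn}$ of the countable recursion truncates to a solution of the $q$-state recursion, and conversely every $q$-state solution extends by zero.

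The main obstacle here is minor: one should verify that Remark \ref{r2}'s convergence requirement for $\sum_j \hat h_{j,y}$ is automatically fulfilled under \eqref{qq}, which it is, since the sum is effectively finite. A secondary check is that the $h_{i,x}$ with $i\geq q$ may be assigned arbitrarily (say, $0$) without altering $\m_\h$, reflecting the redundancy of the countable parametrization on the support of $\m_\h$. Combining (a)--(c) gives the claimed bijection between the quasi Gibbs measures of the countable model (under \eqref{qq}) and those of the $q$-state model.
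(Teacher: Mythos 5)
Your proposal is correct and follows essentially the same route as the paper's own argument: the vanishing weights \eqref{qq} annihilate every cylinder containing a spin value $\geq q$, the recursion \eqref{eq1} forces $\hat h_{i,x}=0$ for $i\geq q$ so that the truncated vectors solve the $q$-state system, and hence the measure restricted to $\Om^{(q)}$ is the $q$-state quasi Gibbs measure (your extra points --- the effectively finite sums satisfying Remark \ref{r2}, the irrelevance of $h_{i,x}$ for $i\geq q$, and the converse extension by zero --- are all sound refinements of the paper's sketch). One small caution: the paper's $p$-adic measures are only \emph{finitely} additive on an algebra of cylinder sets, so rather than invoking countable additivity to get $\m_\h(\Om^c)=0$, you should note that any set of the algebra contained in $\Om^c$ is a finite union of cylinders $\{\s|_{V_n}=\s_n\}$ with some $\s_n(x)\geq q$, each of which already has measure zero directly from \eqref{mu}.
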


\section{Existence of $p$-adic quasi Gibbs measure}

In this section we are going to provide a condition for the
existence of the $p$-adic quasi Gibbs measure.

Taking into account Remark \ref{r2}, we consider the following space
$$c_0=\{\{x_n\}_{n\in\bn}\subset\bq_p: \ |x_n|_p\to 0 \ \
\textrm{as} \ \ n\to\infty\}$$ with a norm
$\|x\|=\max\limits_n|x_n|_p$. Define
$$\Bb_r=\{\{x_n\}_{n\in\bn}\in
c_0: \ \|x\|\leq r\}, \ \ \ r>0.
$$
It is clear that $\Bb_r$ is a
closed set.

\begin{lem}\label{inv} Let $|\t|_p\geq 1$. Then
\begin{equation}\label{FF}
|F_i(\xb,\t)-F_i(\yb,\t)|_p\leq\|\xb-\yb\| \ \ \textrm{for every} \
\ \xb,\yb\in\Bb_\d.
\end{equation}
\end{lem}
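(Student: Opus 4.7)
The plan is to write $F_i(\xb,\theta) - F_i(\yb,\theta)$ as a single fraction over the common denominator $D(\xb)D(\yb)$, where $D(\xb) := \sum_{j=1}^{\infty} x_j + \theta$, and then estimate numerator and denominator separately using the non-Archimedean triangle inequality.

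First I would take care of the denominator. For $\xb \in \Bb_\delta$ (with $\delta$ tacitly small enough, say $\delta<1$) the series $A(\xb):=\sum_{j}x_j$ satisfies $|A(\xb)|_p \leq \|\xb\| \leq \delta < 1 \leq |\theta|_p$, so by the ultrametric triangle inequality $|D(\xb)|_p = |\theta|_p$. Hence $|D(\xb)D(\yb)|_p = |\theta|_p^2$. This uniform lower bound on the denominator is what makes the estimate possible.

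Next, for the numerator $N(\xb)D(\yb) - N(\yb)D(\xb)$, where $N(\xb) := (\theta-1)x_i + A(\xb) + 1$, the key algebraic step (the one calculation I would actually carry out carefully) is to expand, collect terms and extract a common factor of $(\theta-1)$:
\begin{equation*}
N(\xb)D(\yb) - N(\yb)D(\xb) = (\theta-1)\Bigl\{(A(\yb)-A(\xb))(x_i-1) + (x_i - y_i)\,D(\xb)\Bigr\}.
\end{equation*}
Extracting this factor is the main obstacle, since without it one only gets $|F_i(\xb,\theta)-F_i(\yb,\theta)|_p \leq |\theta|_p\|\xb-\yb\|$, which is too weak. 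Once the identity is in hand, I would bound each piece: $|A(\xb)-A(\yb)|_p \leq \max_j|x_j-y_j|_p \leq \|\xb-\yb\|$, $|x_i-y_i|_p \leq \|\xb-\yb\|$, $|x_i-1|_p \leq 1$ (since $|x_i|_p \leq \delta <1$), $|\theta-1|_p \leq |\theta|_p$ and $|D(\xb)|_p = |\theta|_p$ from the previous step.

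Putting these estimates together, the inner brace has $p$-adic absolute value at most $|\theta|_p\,\|\xb-\yb\|$, the full numerator at most $|\theta|_p^2\,\|\xb-\yb\|$, and dividing by $|D(\xb)D(\yb)|_p = |\theta|_p^2$ yields the desired bound $|F_i(\xb,\theta)-F_i(\yb,\theta)|_p \leq \|\xb-\yb\|$. The inequality is uniform in $i$, so it passes to the supremum norm. As remarked, the only nontrivial point is spotting the factorization of the numerator; once $(\theta-1)$ is pulled out, the two powers of $|\theta|_p$ in denominator cancel cleanly against the estimates.
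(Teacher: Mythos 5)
Your proposal is correct and follows essentially the same route as the paper: the same factorization of the numerator, namely $N(\xb)D(\yb)-N(\yb)D(\xb)=(\theta-1)\bigl[(\sum_j x_j+\theta)(x_i-y_i)+(1-x_i)\sum_j(x_j-y_j)\bigr]$ (your version is this identity with signs rearranged), combined with the same denominator estimate $|\sum_j x_j+\theta|_p=|\theta|_p$ and the bounds $|\theta-1|_p\leq|\theta|_p$, $|1-x_i|_p\leq 1$. The only minor difference is that you explicitly flag the tacit assumption $\delta<1$ needed for the denominator equality, which the paper leaves implicit.
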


\begin{proof} Let $\xb=(x_i),\yb=(y_i)\in\Bb_\d$, then it is clear that $|x_i|_p\leq \d$ and
$|y_i|_p\leq \d$ for all $i\in\bn$. Therefore, the strong triangle
inequality implies that
\begin{equation}\label{denom}
\bigg|\sum_{j=1}^\infty x_j+\t\bigg|_p=|\t|_p, \ \
\bigg|\sum_{j=1}^\infty y_j+\t\bigg|_p=|\t|_p.
\end{equation}

From \eqref{eq2} one gets
\begin{eqnarray*}
|F_i(\xb,\t)-F_i(\yb,\t)|_p&=&\frac{|\t-1|_p}{|\t|_p^2}\bigg|\bigg(\sum_{j=1}^\infty x_j+\t\bigg)(x_i-y_i)+(1-x_i)\sum_{j=1}^\infty(x_j-y_j)\bigg|_p\nonumber\\
&\leq&\frac{1}{|\t|_p}\max\bigg\{|\t|_p|x_i-y_i|_p,|1-x_i|_p\bigg|\sum_{j=1}^\infty(x_j-y_j)\bigg|_p\bigg\}\nonumber\\
&\leq&\max_i|x_i-y_i|_p\nonumber \\
&=&\|\xb-\yb\|.
\end{eqnarray*}
This completes the proof.
\end{proof}

Before going to equation \eqref{eq1}, let us first enumerate $S(x)$
for any $x\in V$ as follows $S(x)=\{x_1,\cdots,x_k\}$, here as
before $S(x)$ is the set of direct successors of $x$ (see
\eqref{S(x)}). Using this enumeration one can rewrite \eqref{eq1} by
\begin{equation}\label{eq11}
\hat h_{i,x}=\l(i)\prod_{m=1}^k F_i(\hat\h_{x_m},\t_i), \ \ i\in\bn,
\ \ \textrm{for every} \ \ x\in V\setminus\{x^{(0)}\}.
\end{equation}

Now we can formulate the main result.

\begin{thm}\label{uniq} Assume that $N_{xy}\leq 0$ for all $<x,y>\in\L$, i.e.
$|\t_{xy}|_p\geq 1$ and for $\l$ the condition
\begin{equation}\label{Bk}
|\l(i)|_p\leq \d \qquad \forall i\in\bn,
\end{equation}
is satisfied. Then any two solution of \eqref{eq11} belonging to
$\Bb_\d$ coincides with each other.
\end{thm}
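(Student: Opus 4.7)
The plan is to establish uniqueness by a contraction argument driven by Lemmas~\ref{pr} and~\ref{inv}, and then iterated down the tree.

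First I would verify that the map $F_i(\,\cdot\,,\t)$ sends $\Bb_\d$ into the closed unit ball whenever $|\t|_p\geq 1$ and $\d\le 1$, so that Lemma~\ref{pr} is applicable to the product in \eqref{eq11}. Indeed, for $\xb\in\Bb_\d$ the computation \eqref{denom} gives denominator norm $|\t|_p$, while the numerator $(\t-1)x_i+\sum_j x_j+1$ has norm at most $\max(|\t|_p\d,1)\le|\t|_p$. Hence $|F_i(\xb,\t)|_p\le 1$. This is the prerequisite that lets us pass from the difference of products to the maximum of factor-wise differences.

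Now let $\{\hat\h_x\}$ and $\{\hat\sb_x\}$ be two solutions of \eqref{eq11} with $\hat\h_x,\hat\sb_x\in\Bb_\d$ for every $x\in V\setminus\{x^{(0)}\}$. For any fixed $x$ and coordinate $i\in\bn$ I would combine: (i) Lemma~\ref{pr} to bound the difference of the two $k$-fold products by $\max_m|F_i(\hat\h_{x_m},\t)-F_i(\hat\sb_{x_m},\t)|_p$; (ii) Lemma~\ref{inv} to bound each factor-wise difference by $\|\hat\h_{x_m}-\hat\sb_{x_m}\|$; (iii) hypothesis \eqref{Bk} to supply $|\l(i)|_p\le\d$. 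Taking the supremum over $i\in\bn$ produces the key recursive contraction
\[
\|\hat\h_x-\hat\sb_x\|\le\d\,\max_{1\le m\le k}\|\hat\h_{x_m}-\hat\sb_{x_m}\|.
\]

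Finally, I would iterate this inequality $n$ times along descendants of $x$. Since both solutions lie in $\Bb_\d$ at every vertex, the right-hand side is always bounded by $\d$ itself, giving $\|\hat\h_x-\hat\sb_x\|\le\d^{\,n+1}$ for every $n\in\bn$. Provided $\d<1$ (which is implicitly in force: otherwise the geometric factor in \eqref{denom} does not produce a strict reduction), letting $n\to\infty$ forces $\hat\h_x=\hat\sb_x$, and since $x$ was arbitrary the two solutions agree everywhere. The main point requiring care is the range estimate $|F_i|_p\le 1$ used to invoke Lemma~\ref{pr}; once that is in hand, the contraction and its iteration are essentially mechanical.
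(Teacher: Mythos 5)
Your proof follows essentially the same route as the paper's: the same contraction inequality $\|\hat\h_x-\hat\sb_x\|\le\d\max_{1\le m\le k}\|\hat\h_{x_m}-\hat\sb_{x_m}\|$ obtained from Lemma~\ref{pr}, Lemma~\ref{inv} and \eqref{Bk}, then iterated down the tree using the a priori bound coming from membership in $\Bb_\d$. Your explicit verification that $|F_i(\xb,\t)|_p\le 1$ (the hypothesis needed to invoke Lemma~\ref{pr}) and your observation that $\d<1$ must implicitly be in force are both correct and, if anything, make the argument slightly more complete than the published one.
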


\begin{proof} Assume that $\hat\h=\{\h_x,x\in V\setminus\{x^0\}\}, \hat\sb=\{\hat\sb_x,x\in V\setminus\{x^0\}\}$
be two solutions of \eqref{eq11}. Now we are going to show that they
coincide with each other. Indeed, let $x\in V\setminus\{x^{(0)}\}$
be an arbitrary vertex. Using Lemma \ref{pr} from \eqref{eq11} one
can find
\begin{eqnarray*}
|\hat h_{i,x}-\hat s_{i,x}|_p\leq |\l(i)|_p\max_{1\leq m\leq
k}\bigg\{|F_i(\hat\h_{x_m}))_i-F_i(\hat\sb_{x_m})|_p\bigg\},
\end{eqnarray*}
which with \eqref{FF} and \eqref{Bk} implies that
\begin{equation}\label{hx}
\|\hat\h_x-\hat\sb_x\|\leq \d\max_{1\leq m\leq
k}\bigg\{\|\hat\h_{x_m}-\hat\sb_{x_m}\|\bigg\}
\end{equation}

Now take an arbitrary $\e>0$. Let $n_0\in\bn$ such that
$1/p^{n_0}<\e$. Iterating \eqref{hx} $n_0$ times one gets
$\|\hat\h_x-\hat\sb_x\|<\e$.

Hence, from the arbitrariness of $\e$ we obtain $\hat\h_x=\hat\sb_x$
for every $x\in V\setminus\{x^{(0)}\}$.
 This completes the proof.
\end{proof}

The provided theorem states, if equation \eqref{eq11} has solution
belonging to $\Bb_\d$, then it is unique. But, in general, we do not
whether the equation has a solution or not. Below, we provide a
sufficient condition for the existence of solution of \eqref{eq11}.

\begin{rem} The proved Theorem extends the main results of \cite{KM1,M}
to more general kind of measures, since there it was taken
$\frak{p}=\exp_p$.
\end{rem}

\subsection{Homogeneous case} In this subsection, we shall assume that $N_{xy}:=N$ for all
$<x,y>\in\L$. Namely, we want to consider homogeneous case.

Recall that a function $\h=\{\h_x\}_{x\in V\setminus\{x^0\}}$ is
called {\it translation-invariant} if $\h_{\tau_x(y)}=\h_y$ for all
$x,y\in V\setminus\{x^0\}$.  Let us restrict ourselves to the
description of translation-invariant solutions of \eqref{eq1},
namely $\h_x=\h(=(h_0,h_1,\dots,))$ for all $x\in V$.

Define the following mapping
\begin{equation}\label{F}
(\cf_\t(\xb))_i=\l(i)(F_i(\xb,\t))^k, \ \ \ i\in\bn,
\end{equation}
where $\xb=\{x_n\}\in c_0$. Here, $\t=\frak{p}^N$.  Due to Remark
\ref{r2}, the mapping $\cf_\t$ is well defined.

\begin{thm}\label{inv1} Let $N\leq 0$ i.e. $|\t|_p\geq 1$. Assume that for $\l$ the condition
\eqref{Bk} is satisfied. Then $\cf_\t(\Bb_\d)\subset \Bb_\d$, and
\begin{equation}\label{fxy}
\|\cf_\t(\xb)-\cf_\t(\yb)\|\leq\d\|\xb-\yb\| \ \ \textrm{for every}
\ \ \xb,\yb\in\Bb_\d.
\end{equation}
\end{thm}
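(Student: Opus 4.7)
The plan is to break the theorem into two pieces: (i) invariance $\cf_\t(\Bb_\d)\subset\Bb_\d$, and (ii) the Lipschitz estimate \eqref{fxy}. My overall strategy is to first secure a uniform bound $|F_i(\xb,\t)|_p\le 1$ on $\Bb_\d$, and then feed this bound into Lemma~\ref{pr} and Lemma~\ref{inv} to dispatch both items mechanically.

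To obtain the bound on $F_i$ painlessly, I would compare with the base point $\mathbf{0}\in\Bb_\d$. A direct evaluation of \eqref{eq2} gives $F_i(\mathbf{0},\t)=1/\t$, so $|F_i(\mathbf{0},\t)|_p=1/|\t|_p\le 1$ since $|\t|_p\ge 1$. Lemma~\ref{inv} then yields
\[
|F_i(\xb,\t)|_p\le\max\bigl\{|F_i(\mathbf{0},\t)|_p,\ |F_i(\xb,\t)-F_i(\mathbf{0},\t)|_p\bigr\}\le\max\{1,\|\xb\|\}\le 1
\]
(where in the last step I use that $\d\le 1$ in the regime of interest, the same implicit restriction that underlies \eqref{denom}). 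Invariance follows immediately: $|(\cf_\t(\xb))_i|_p=|\l(i)|_p\,|F_i(\xb,\t)|_p^k\le|\l(i)|_p\le\d$, and $|\l(i)|_p\to 0$ forces $(\cf_\t(\xb))_i\to 0$, so $\cf_\t(\xb)\in\Bb_\d$.

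For the Lipschitz estimate I would start from
\[
(\cf_\t(\xb))_i-(\cf_\t(\yb))_i=\l(i)\bigl(F_i(\xb,\t)^k-F_i(\yb,\t)^k\bigr),
\]
apply Lemma~\ref{pr} with $a_1=\cdots=a_k=F_i(\xb,\t)$ and $b_1=\cdots=b_k=F_i(\yb,\t)$ (both of $p$-adic absolute value $\le 1$ by the previous step) to collapse the $k$-th-power difference into a linear one, and then invoke Lemma~\ref{inv}:
\[
|(\cf_\t(\xb))_i-(\cf_\t(\yb))_i|_p\le|\l(i)|_p\,|F_i(\xb,\t)-F_i(\yb,\t)|_p\le\d\,\|\xb-\yb\|.
\]
Taking the supremum over $i$ then gives \eqref{fxy}.

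The only genuine obstacle sits in step one, the uniform bound $|F_i|_p\le 1$ on $\Bb_\d$. A frontal attack on \eqref{eq2} would force a case split between $|\t|_p\d\le 1$ and $|\t|_p\d>1$ with careful bookkeeping of numerator against denominator, because both $|\t-1|_p$ and $|(\t-1)x_i|_p$ compete with the $|\t|_p$ denominator. Routing through Lemma~\ref{inv} via the evaluation at $\mathbf{0}$ bypasses this entirely, after which Lemma~\ref{pr} and Lemma~\ref{inv} close the argument with essentially no further calculation.
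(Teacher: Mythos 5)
Your proof is correct and follows the same overall decomposition as the paper: first the uniform bound $|F_i(\xb,\t)|_p\le 1$ on $\Bb_\d$, then invariance, then the Lipschitz estimate by reducing the $k$-th power difference to a linear one and invoking Lemma~\ref{inv}. The only (harmless) divergences are in two sub-steps: the paper obtains $|F_i(\xb,\t)|_p\le 1$ by a direct numerator/denominator estimate, $|F_i(\xb,\t)|_p\le\max\{|\t|_p|x_i|_p,1\}/|\t|_p\le 1$ (which in fact requires no case split, only one application of the strong triangle inequality, so your stated motivation for detouring through $F_i(\mathbf{0},\t)=1/\t$ somewhat overstates the difficulty of the frontal route), and it handles $F_i(\xb,\t)^k-F_i(\yb,\t)^k$ via the explicit factorization $a^k-b^k=(a-b)\sum_{\ell}a^{k-1-\ell}b^{\ell}$ rather than via Lemma~\ref{pr}; your use of Lemma~\ref{pr} is if anything slightly cleaner, and your explicit remark that $|\l(i)|_p\to 0$ guarantees $\cf_\t(\xb)\in c_0$ fills a point the paper leaves tacit. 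Both arguments rest on the same implicit assumption $\d<1$ (equivalently $\d\le 1/p$), which neither you nor the paper makes fully explicit but which is needed for \eqref{denom}.
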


\begin{proof} Let $\xb=(x_i),\yb=(y_i)\in\Bb_\d$. Then from \eqref{eq2} with \eqref{denom} we have
\begin{equation}\label{eq3}
|F_i(\xb,\t)|_p=\frac{\bigg|\t x_i+\sum\limits_{j=1, j\neq i}^\infty
x_j+1\bigg|_p}{|\t|_p}
\leq\frac{\max\{|\t|_p|x_i|_p,1\}}{|\t|_p}\leq 1
\end{equation}
this with \eqref{F} yields that $\cf_\t(\Bb_\d)\subset \Bb_\d$.

Now using \eqref{FF},\eqref{F},\eqref{eq3} we obtain

\begin{eqnarray*}
\|\cf_\t(\xb)-\cf_\t(\yb)\|&=&\max_i|\l(i)|_p\max_i|(F_i(\xb,\t))^k-
(F_i(\yb,\t))^k|_p\\
&\leq &\d|F_i(\xb,\t)-
F_i(\yb,\t)|_p\max_i\bigg|\sum_{\ell=0}^{k-1}F^{k-\ell}_i(\xb,\t)F^{\ell}_i(\yb,\t)\bigg|_p\\
&\leq &\d\|\xb-\yb\|.
\end{eqnarray*}
This completes the proof.
\end{proof}

\begin{rem} We should stress that if $N>0$ then the similar methods
are no longer applicable for $\cf_\t$, therefore, it needs other
kind of techniques. So, such a case will be considered elsewhere.
\end{rem}

 Now thanks to  Lemma \ref{inv1} we can apply the fixed point
theorem to $\cf_\t$, which implies the existence of  unique fixed
point $\xb_0\in\Bb_\d$. This, according to Theorem \ref{uniq}, means
that there exists a unique solution of \eqref{eq1}. Hence, due to
Theorem \ref{comp1}, such a solution defines the $p$-adic quasi
Gibbs measure $\m_0$.

\begin{rem}\label{r5}  Let us emphasize the following notes:
\begin{enumerate}
\item[(a)]  Note that  in \cite{Mq} we have proved for the $q+1$-state Potts
model the $p$-adic quasi Gibbs measure is unique if $q$ and $p$ are
relatively prime. Therefore, the proved Theorem \ref{uniq} shows the
difference between finite and countable state Potts models.
Moreover, in the real case such kind of result is unknown (see
\cite{Ga,GR}).

\item[(b)] We also should stress that the condition \eqref{Bk} is
important. If we replace $\d$ with 1, then Theorem \ref{uniq} may
not be valid. Namely, in that case it may occur a quasi phase
transition. Indeed, if $\l(1)=\l(2)=1$ and $\l(k)=0$ for every
$k\geq 3$, then clearly \eqref{Bk} is not satisfied. On the other
hand, our model reduces to 3-state Potts model. For such a model in
\cite{Mq} the existence of the quasi phase transition has been
proved at $p=2$.
\end{enumerate}
\end{rem}

\subsection{Periodic case} In this subsection, we consider when $N_{xy}$ is $G_m$-periodic.
This means that $N_{\tau_g(x)\tau_g(y)}=N_{xy}$ for all $g\in G_m$
and $x,y\in \G^m$. Therefore, let us denote
$$
\t_i=\frak{p}^{N_{xy}}, \ \ \textrm{if}\ \ d(x,x^0)\equiv
i(\textrm{mod}\ m), \ i=0,\dots,m-1.
$$

We want to find $G_m$-periodic solution of \eqref{eq1}. Recall that
a function $\h=\{\h_x\}_{x\in V\setminus\{x^0\}}$ is called {\it
$G_m$-periodic} if $\h_{\tau_g(x)}=\h_x$ for all $g\in G_m$, $x\in
V\setminus\{x^0\}$. Note that if $\h=\{\h_x\}$ is a $G_k$-periodic,
then it can be defined by a $k$-collection of vectors
$(\h_0,\dots,\h_{k-1})$, where $\h_i=\{h^{(i)}_j\}_{j\in\bn}$, i.e.
$\h_x=\h_i$, if $d(x,x^0)\equiv i(\textrm{mod}\ m)$,
$i=0,\dots,k-1$.

Then equation \eqref{eq1} is reduced to the following one
\begin{eqnarray}\label{k-eq1}
\left\{
\begin{array}{ll}
h^{(i)}_j=\l_j\big(F_j(\h^{(i+1)},\t_{i+1})\big)^k, \ \ i=0,\dots,
m-2\\
h^{(m-1)}_j=\l_j\big(F_j(\h^{(0)},\t_{0})\big)^k,
\end{array}
\right. \ \ j\in\bn
\end{eqnarray}

Hence, define the following mapping
\begin{equation}\label{H}
(\ch(\xb))=\cf_{\t_{m-1}}(\cf_{\t_{m-2}}(\cdots(\cf_{\t_{0}}(\xb))\cdots))
\end{equation}
where $\xb=\{x_n\}\in c_0$.

It is clear that the fixed point of \eqref{H} defines a solution of
\eqref{k-eq1}.

\begin{thm}\label{invh} Let $|\t_i|_p\geq 1$, $i=0,\dots,m-1$. Assume that for $\l$ the condition
\eqref{Bk} is satisfied. Then $\ch(\Bb_\d)\subset \Bb_\d$, and
\begin{equation}\label{fxy}
\|\ch(\xb)-\ch(\yb)\|\leq\d^m\|\xb-\yb\| \ \ \textrm{for every} \ \
\xb,\yb\in\Bb_\d.
\end{equation}
\end{thm}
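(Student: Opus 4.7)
The plan is to derive Theorem \ref{invh} as a direct iterative application of Theorem \ref{inv1}, since by construction $\ch$ is the $m$-fold composition $\cf_{\t_{m-1}}\circ\cf_{\t_{m-2}}\circ\cdots\circ\cf_{\t_{0}}$ of maps of exactly the type treated in that theorem. The hypothesis $|\t_i|_p\geq 1$ for each $i=0,\dots,m-1$ is precisely what is required to legitimate applying Theorem \ref{inv1} with $\t$ replaced by any of the $\t_i$.

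First I would establish the invariance $\ch(\Bb_\d)\subset\Bb_\d$. Theorem \ref{inv1} says $\cf_{\t_i}(\Bb_\d)\subset\Bb_\d$ for every $i$, so chaining these inclusions across $i=0,1,\dots,m-1$ yields $\ch(\Bb_\d)\subset\Bb_\d$ by a one-line induction on the number of composed factors.

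Next, for the contractive bound, I would fix $\xb,\yb\in\Bb_\d$ and introduce the intermediate iterates
$$
\xb^{(0)}=\xb, \qquad \xb^{(j+1)}=\cf_{\t_j}(\xb^{(j)}), \qquad j=0,1,\dots,m-1,
$$
and analogously $\yb^{(j)}$ starting from $\yb^{(0)}=\yb$. The invariance established in the previous step ensures that all intermediate iterates remain in $\Bb_\d$, so Theorem \ref{inv1} applies at each stage and gives
$$
\|\xb^{(j+1)}-\yb^{(j+1)}\|\leq \d\,\|\xb^{(j)}-\yb^{(j)}\|.
$$
Iterating this inequality $m$ times yields $\|\xb^{(m)}-\yb^{(m)}\|\leq \d^m\|\xb-\yb\|$, which is precisely the desired estimate since $\ch(\xb)=\xb^{(m)}$ and $\ch(\yb)=\yb^{(m)}$.

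I do not anticipate a real obstacle: once Theorem \ref{inv1} has been secured, Theorem \ref{invh} is the textbook statement that a composition of Lipschitz maps on a commonly invariant set multiplies the Lipschitz constants. The only point requiring a little care is to invoke the invariance at every intermediate step so that the single-step Lipschitz bound of Theorem \ref{inv1} is actually applicable at each stage of the iteration.
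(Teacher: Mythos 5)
Your proposal is correct and is exactly the argument the paper intends: the paper's entire proof is the one-line remark that Theorem \ref{invh} ``immediately follows from Theorem \ref{inv1},'' i.e.\ the composition of the $m$ maps $\cf_{\t_i}$, each leaving $\Bb_\d$ invariant and each $\d$-Lipschitz there, is $\d^m$-Lipschitz on $\Bb_\d$. You have merely written out the intermediate iterates explicitly, which fills in the same chain of inequalities the author leaves implicit.
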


The proof immediately follows from Theorem \ref{inv1}.

Consequently, Theorem \ref{invh} allows us to apply the fixed point
theorem to $\ch$, which implies the existence of  unique fixed point
$\xb_0\in\Bb_\d$. This, according to Theorem \ref{uniq}, means that
there exists a unique solution of \eqref{eq1}. Hence, due to Theorem
\ref{comp1}, such a solution defines the $p$-adic quasi Gibbs
measure $\m_m$.

\section{Conclusion}

In the present study we have provided a new construction of measure,
called $p$-adic quasi Gibbs measure, for countable state $p$-adic
Potts model on the Cayley tree. Note that the construction depends
on a parameter $\frak{p}$ and wights $\{\l(i)\}_{i\in\bn}$. In
particular case, i.e. if $\frak{p}=\exp_p$, the defined measure
coincides with $p$-adic Gibbs measure (see \cite{KM,KM1}). In this
paper, under some condition on weights we proved the existence of
$p$-adic quasi Gibbs measures associated with the model. Note that
this condition does not depend on values of the prime $p$. Moreover,
our result extends the previous proved ones in \cite{KM1,M}. An
analogues fact is not valid when the number of spins is finite. We
should stress that when states are finite and $\rho=p$, then the
corresponding $p$-adic quasi Gibbs measures have been investigated
in \cite{Mq}.

\section*{Acknowledgement} The present study have been done within
the IIUM grant EDW B11-159-0637. The author also acknowledges the
Junior Associate scheme of the Abdus Salam International Centre for
Theoretical Physics, Trieste, Italy.

\end{document}